\newtheorem{theorem}{Theorem}[section]
\newtheorem{lemma}[theorem]{Lemma}
\newenvironment{proof}[1][Proof]{\begin{trivlist}
\item[\hskip \labelsep {\bfseries #1}]}{\end{trivlist}}
\newenvironment{definition}[1][Definition]{\begin{trivlist}
\item[\hskip \labelsep {\bfseries #1}]}{\end{trivlist}}
\newcommand{\qed}{\nobreak \ifvmode \relax \else
      \ifdim\lastskip<1.5em \hskip-\lastskip
      \hskip1.5em plus0em minus0.5em \fi \nobreak
      \vrule height0.75em width0.5em depth0.25em\fi}
\begin{document}

\title{Visualising stock flow consistent models as directed acyclic graphs}
\author{P. Fennell \and D. O'Sullivan \and A. Godin
 \and S. Kinsella\thanks{University of Limerick. Fennell and O'Sullivan: Department of Mathematics and Statistics. Godin and Kinsella: Department of Economics. Corresponding author: \textsf{stephen.kinsella@ul.ie}. The authors gratefully acknowledge the support of the Institute for New Economic Thinking. Fennell and O'Sullivan gratefully acknowledge Science Foundation Ireland grant 11/PI/1026.}}
\maketitle

\begin{abstract}
\noindent We show how every stock-flow consistent model of the macroeconomy can be represented as a directed acyclic graph. The advantages of representing the model in this way include graphical clarity, causal inference, and model specification. We provide many examples implemented with a new software package. 
\end{abstract}

\noindent \emph{Keywords}: {Stock flow consistent models, directed graphs, macroeconomic modeling.}  \\
\noindent \emph{JEL Codes}: {E01, E17, E12, E17.}  \\

\section{Introduction}
\label{sec:intro}


In this paper we rigorously show that for every stock flow consistent macroeconomic model there is a corresponding directed acyclic graph which is unique. This is achieved by constructing the graphical representation of the stock flow consistent model and using graph theoretical techniques to decompose the graph into an acyclic graph. We illustrate the theory with an example, and provide details of a computational package that gives the directed acyclic graph representation of any stock flow consistent model.

A stock flow consistent model of an economy is built from the national flow of funds accounts, treating sectoral interlinkages carefully to ensure consistency of stocks and flows over time \citep{godley:2007}. A set of behavioural equations supplies the causal assumptions and parameterisations required for modeling changes in policy and changes in the business cycle. Stock flow consistent models are rapidly developing as alternatives to more traditional models of the macroeconomy as the recent survey by \cite{caverzasi2014post} shows. These models are complicated structures with typically several hundred identity and accounting equations as well as a handful of behavioural equations. Describing and displaying the balance sheet and flow matrices, in addition to working out the results of the model, can take up a lot of time and space within a paper, which might better be used for exposition of the model's properties.


An open question in the stock flow consistent (SFC) literature is the specification of the causal structure within the model. Typically\footnote{This form of the consumption function follows \cite{godley:2007}.} one might see a linear consumption function of the form $C = \alpha_{1}YD+\alpha_{2}V_{-1}$, where $YD$ is the flow of disposable income within a given period, $V_{-1}$ is the stock household wealth from previous periods, and $\alpha_{1,2}$ are parameters taking values between zero and one. In this case, causality runs from left to right: when disposable income increases, consumption increases. 

In this paper we develop tools to ask how justified is this, or any other, assumption of causality within SFC models. We argue that one method for gaining a deeper understanding of the causality within any SFC model is through its representation as a Directed Acyclic Graph (DAG). 

Directed graphs have been applied in various fields to establish cause-causality relationships to great effect. In an interacting system, the activation or expression of one variable can case an effect in other variables. This effect is generally unknown as the system is observed as a whole and not in a controlled environment where variables can be isolated and then observed independently of the system. The general aim therefore is reverse engineering, inferring variable interactions from system wide observations \citep{morgan2007counterfactuals}, \citep{pearl2000causality}, \citep{lauritzen2001causal}.

DAG models have been used in the economic literature to model price discovery, \citep{haigh2004causality}, contagion in stock markets, \citep{yang2008contagion}, \citep{bessler2003structure}, the relationship between money and prices\citep{bessler2002money}, \citep{wang2006price}, and price dynamics in agricultural markets \citep{bessler:2003}. In a series of papers including \cite{hoover:2001} and \cite{hoover:2003}, Kevin Hoover has explicitly tied the development of causality in macroeconomic models to DAG representation, while \cite{demiralp2003searching} have explored causal discovery in vector autoregression models. To the best of our knowledge these tools have not been applied to macroeconomic models of the type we study in this paper.

As previously mentioned, we aim to show the correspondence between a stock flow consistent model and a DAG. In a graphical representation of a stock flow consistent model, each node in the graph corresponds to a variable of the model while a directed edge from node $i$ to node $j$ exists if the equation for the variable corresponding to $j$ in the model is a function of the variable corresponding to $i$. For example in the \textsf{SIM} model of \cite{godley:2007}, output $Y$ is a function of Government spending $G$ and household consumption $C$ as given by $Y=G+C$. Thus in the graphical representation of this model, there are directed edges from $G$ to $Y$ and from $C$ to $Y$ (Fig.~\ref{fig:eqn_to_DAG}).
 
\begin{figure}
  \centering
  \includegraphics[width=0.7\textwidth]{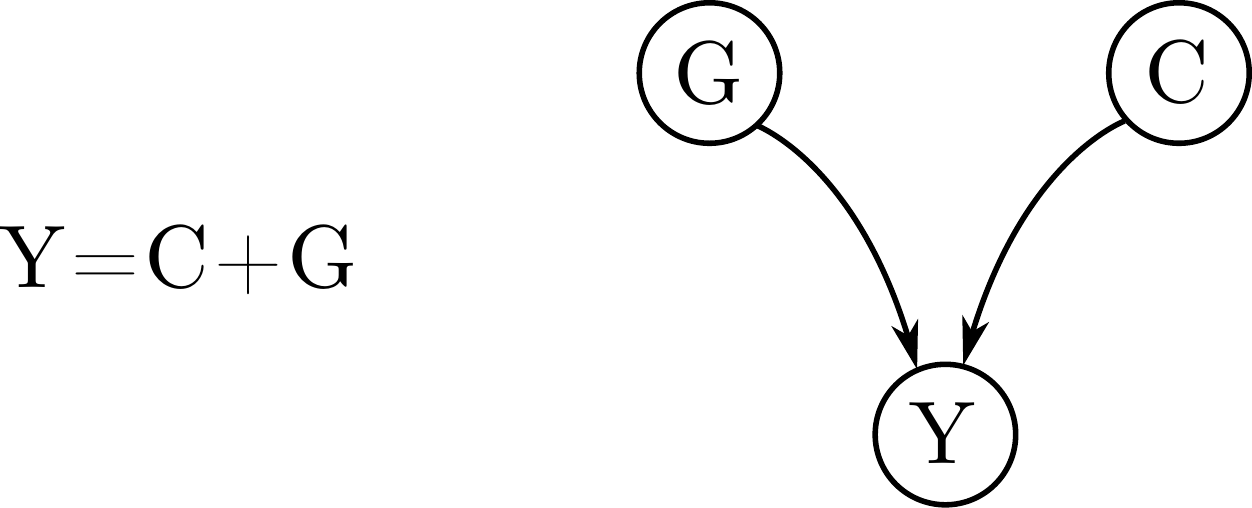}
  \caption{Left: The equation for $Y$ in the SIM model of~\cite{godley:2007}. Right: The graph representation of the equation. Each node corresponds to a variables in the equation. Changes in $C$ and $G$ cause a change in $Y$ and so there are directed links from $C$ to $Y$ and from $G$ to $Y$.}
  \label{fig:eqn_to_DAG}
\end{figure}

This graphical representation of a stock flow consistent model is extremely useful. It allows us to establish the cause and effect mechanisms in the model. The number of equations in such models can be of the order of hundreds~\citep{taylor:2008}, and although the models can be solved by numerical methods is it difficult to analyse the model in terms of the cause-effect relationship between variables. Through the graphical representation provided by DAGs, the effect of exogenous variables - such as government spending -  on the whole system can be examined by following all paths extending from the variable. 

The paper is organized as follows. The equivalence between any stock flow consisitent model and its DAG is derived in Section~\ref{sec:marrying}, while an example is given in Section~\ref{sec:application} of the \textsf{OPEN} model, developed by \cite{godley:2007}. Finally, we conclude and give directions for future research.

\section{Marrying DAGs to SFC Models}
\label{sec:marrying}

\begin{definition}
A \emph{stock flow consistent model} is a macroecenomic model based on both the balance sheet of an economy and a transaction flow matrix $T$ which describes the flow of funds between all sectors of that economy. The model variables $\mathbf{v} = (v_1, v_2, \dots, v_n)$ (ex. $(C, G, I, X,\dots)$) are the non-zero entries of $T$ and the relationships between the variables are described by the system of equations $\mathbf{v}=\mathbf{f}(\mathbf{v})$ i.e.
\begin{equation}
  \mathbf{v} = \begin{pmatrix}
    v_1 \\ v_2 \\ \vdots \\ v_n
  \end{pmatrix} 
  = \begin{pmatrix} 
    f_1(\mathbf{v}) \\
    f_2(\mathbf{v}) \\
    \vdots \\
    f_3(\mathbf{v}) \\
    \end{pmatrix} = \mathbf{f}(\mathbf{v})
\end{equation}
These equations ensure consistency of stocks and flows over time.
\end{definition}

A SFC model can be converted to a directed graph in the following manner. From the set of equations of the SFC we define the Jacobian as
\begin{equation}
  J = \frac{\partial\mathbf{f}}{\partial \mathbf{v}}  =  \begin{pmatrix}
  \frac{\partial f_1}{\partial v_1} & \frac{\partial f_1}{\partial v_2} & \cdots & \frac{\partial f_1}{\partial v_n} \\
  \frac{\partial f_2}{\partial v_1} & \frac{\partial f_2}{\partial v_2} & \cdots & \frac{\partial f_2}{\partial v_n} \\
  \vdots  & \vdots  & \ddots & \vdots  \\
  \frac{\partial f_n}{\partial v_1} & \frac{\partial f_n}{\partial v_2} & \cdots & \frac{\partial f_n}{\partial v_n}
 \end{pmatrix}
\end{equation}
This Jacobian gives us a medium to infer causality in the model through the mathematical operation of differentiation. A variable $v_i$ is directly dependent - or caused by - another variable $v_j$ if and only if 
\begin{equation}
  \frac{\partial v_i}{\partial v_j} \neq 0.
  \label{eq:partialder}
\end{equation}
Since $v_i$ is fully determined by the equation $v_i = f_i(\mathbf{v})$, then Eq.~\eqref{eq:partialder} is equivalent to the condition that $\partial f_i / \partial v_j \neq 0$. Thus $v_j$ causes an effect in $v_i$ if and only if the $(i,j)^{\mbox{th}}$ entry of $J$ is non-zero. These causal dependancies are encoded as a binary matrix $A^{SFC}$  that depends on the SFC model through $J$ by
\begin{equation}
  A^{SFC}_{ij} = \begin{cases}
    1 & \mbox{if }J_{ij} \neq 0 \\
    0 & \mbox{if }J_{ij} = 0 
    \end{cases}
\label{eq:ASFC}
\end{equation} 
This causality will be the foundation of the DAG, but before continuing with the derivation we introduce and formally define the necessary graph theoretical concepts.

\begin{definition}
A \textit{directed graph} $G = (V,E)$ is a set of nodes $V = \left\{v_1,v_2,...,v_n\right\}$ along with a set of directed edges $E = \left\{e_1,e_2,...,e_n\right\}$ that link the nodes. Each edge $e_i$ is of the form $e_i = (v_{i_1},v_{i_2})$ indicating that there is a link from node $v_{i_1}$ to node $v_{i_2}$.  
\end{definition}

A useful representation of a directed network is the \emph{adjacency matrix}. This is a binary matrix $A$ where
\begin{equation}
  A_{ij} = \begin{cases}
    1 & \mbox{if the is a link between $v_i$ and $v_j$} \\
    0 & \mbox{otherwise}
    \end{cases}
\end{equation}
The adjacency matrix is a great mathematical tool used to calculate a range of properties about the network. One such property, which is important in our setting, is the existence of cycles. 

\begin{definition}
In a directed network, a \emph{cycle} is a closed loop of edges where the direction of each edge points the same way around the loop. A directed network that has no cycles is called a \emph{directed acyclic network}, or \emph{DAG}. 
\end{definition}
To find out if a network is acyclic it is sufficient to examine the eigenvalues of adjacency matrix. If all of the eigenvalues of the adjacency matrix are equal to zero then the network is acyclic. Otherwise cycles exist, each of which can be mapped to a strongly connected component.

\begin{definition}
A \emph{strongly connected component} in a directed network $G=(V,E)$ is a maximal subset $V_S \in V$ of nodes such that every pair of nodes in $V_S$ are connected by directed paths in both directions. Maximal here means that no additional nodes in $V$ can be included in $V_S$ without breaking its property of being strongly connected. Every node in the network belongs to one and only one strongly connected component, and strongly connected components of only one node may exist. The set $S = \{S_1, S_2,\dots, S_m\}$ of strongly connected components of $G$ forms a partition of $G$ and this partition is unique.
\end{definition}
In a strongly connected component of more than one node, every node is part of a cycle. This is intuitive, as to be part of the SCC there must be a directed path between the node itself and every other node in the SCC in both directions. An implication of this is that the set of nodes forming any cycle is a subset of exactly one strongly connected component.

\begin{figure}[t]
  \centering
  \includegraphics[width = 0.5\textwidth]{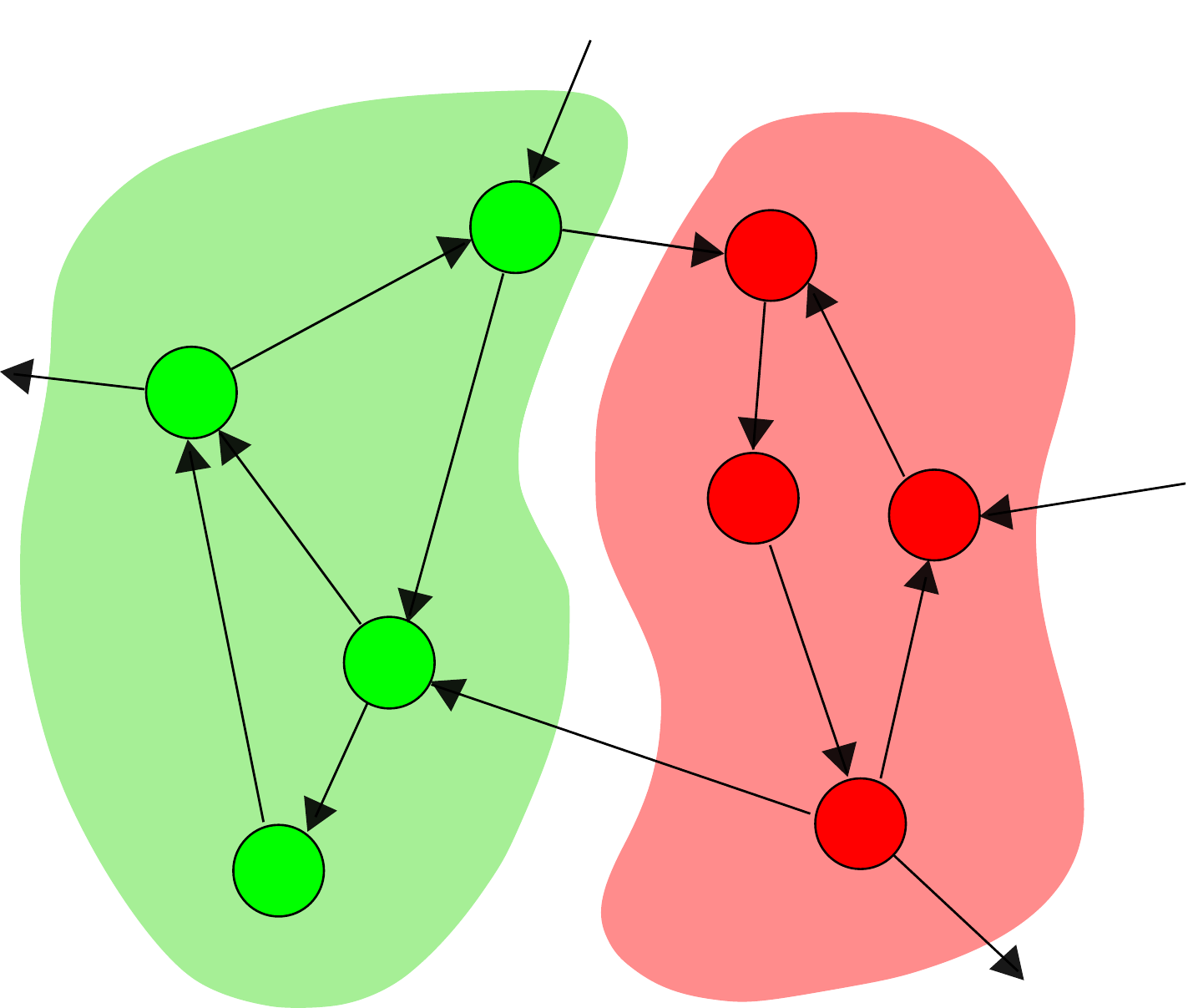}
  \caption{Schematic of the proof of Lemma~\ref{lemma}. In the condensation graph, the stongly connected components $S_1$ (green) and $S_2$ (red) are replaced by metanodes. If these two metanodes form a cycle, then the nodes in $S_1$ and $S_2$ are linked be directed paths in both directions, which contradicts the maximal property of stronlgy connected components.}
  \label{fig:lemma}
\end{figure}

\begin{lemma}
Let $G=(V, E)$ be a directed graph and $S = \{S_1, S_2,\dots, S_m\}$ be the set of $m$ strongly connected components of $G$. The \emph{condensation graph} $G_C$ of $G$ is the graph whose strongly connected componenents are contracted into single nodes called \emph{metanodes}\footnote{Efficient algorithms which identify the strongly connected components exist such as Trajan's algorithm  \cite{tarjan1972depth} or Kosaraju's algorithm \cite{hopcroft1983data}.}. Then the condensation graph is a DAG. 
\label{lemma}
\end{lemma}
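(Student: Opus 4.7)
The plan is to argue by contradiction: assume the condensation graph $G_C$ contains a cycle and derive a violation of the maximality condition in the definition of strongly connected components. Suppose the metanodes $M_{i_1}, M_{i_2}, \ldots, M_{i_k}, M_{i_1}$ form a directed cycle in $G_C$, so that for each consecutive pair $(M_{i_j}, M_{i_{j+1}})$ there is an edge in $G_C$. By the definition of the condensation, such an edge exists precisely because in the original graph $G$ there is at least one edge from some node $u_j \in S_{i_j}$ to some node $w_{j+1} \in S_{i_{j+1}}$.

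Next, I would exhibit a directed path in $G$ from any node in $S_{i_j}$ to any node in $S_{i_{j'}}$ for $j \neq j'$, in both directions, by concatenating intra-component paths with the inter-component edges supplied by the cycle. Concretely, within any single component $S_{i_j}$ one can travel from any node to any other node (this is exactly the defining property of a SCC), so one may freely ``hop'' inside $S_{i_j}$ to the designated outgoing node $u_j$, cross the edge $(u_j, w_{j+1})$, and continue around. Going once around the cycle gives a path from any node in $S_{i_j}$ back to itself, and a partial traversal gives a directed path from any node in $S_{i_j}$ to any node in $S_{i_{j'}}$; reversing the orientation of the traversal (using the remaining portion of the cycle) yields a directed path in the opposite direction.

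Therefore every pair of nodes in $S_{i_1} \cup S_{i_2} \cup \cdots \cup S_{i_k}$ is connected by directed paths in both directions, so the union is itself strongly connected. Since $k \geq 2$, this union strictly contains each $S_{i_j}$, contradicting the maximality clause in the definition of a strongly connected component (recall that each $S_{i_j}$ is, by definition, a maximal such subset, and the partition into SCCs is unique). Hence no cycle in $G_C$ can exist, and $G_C$ is a DAG.

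The only delicate step is the second one: one must be careful that the composed path really exists and does not accidentally rely on anything outside the SCCs on the cycle. The bookkeeping is routine once one observes that intra-SCC reachability is unrestricted and that the cycle in $G_C$ provides the required inter-SCC bridges in a consistent cyclic order, exactly as depicted in Figure~\ref{fig:lemma}.
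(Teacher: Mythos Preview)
Your proposal is correct and follows essentially the same route as the paper's proof: a contradiction argument showing that a cycle among metanodes would force two distinct strongly connected components to be mutually reachable in $G$, violating maximality. Your version is simply more explicit about the path construction (concatenating intra-SCC paths with the inter-SCC bridge edges), whereas the paper leaves that bookkeeping implicit.
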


\begin{proof}
We aim to show that there are no cycles in $G_C$ in which case it is a DAG. This will be proved by contradiction. A schematic of the proof is given in Fig.~\ref{fig:lemma}.

Assume that there exists a cycle in $G_C$. Then by definition, there is a directed path between at least two vertices in $G_C$ in both directions. Because the nodes in $G_C$ are the stongly connected components in $G$, this implies that there is directed path in both directions between at least two strongly connected components of $G$. However this cannot be the case. The strongly connected components of $G$ are maximal implying that for every node outside the SCC there is not a directed path in both directions to the SCC. Thus two stronlgy connected components cannot be linked by paths in both directions, and so we arrive at a contradiction. 
\end{proof}

The construction of a DAG from the SFC model now follows from the graph theoretical concepts introduced here. Recall that the behaviour of the variables in the SFC is governed by the system of equations $\mathbf{v}=\mathbf{f}(\mathbf{v})$ with Jacobian $J = \partial \mathbf{f} / \partial \mathbf{v}$. The dependencies between variables in the system was encoded in the matrix $A^{SFC}$ which is related to $J$ through Eq.~\eqref{eq:ASFC}. We now construct a directed graph $G$ from the SFC as the graph whose adjacency matrix is $A^{SFC}$. The condensation graph of this directed graph is taken~ and by Lemma~\ref{lemma} this is a DAG. Since there is only one unique partitioning of a graph into its strongly connected components then the DAG is unique.

The implementation of the DAG construction is performed in \textsf{R} and is available in a package from the authors webpage. This package takes the system of equations in the SFC model and returns various outputs. These include the initial directed graph, the strongly connected components which contain all of the cycles and the unique DAG where the cycles have been replaced by metanodes. In the next section we illustrate two examples of the conversion of a SFC into its DAG. We explain why this is important and show the insights into the model that this can afford.

\newpage

\section{Application}
\label{sec:application}

\subsection{The Open Economy Model}
In Chapter 12 of~\cite{godley:2007}, a stock flow consistent model of two open, interacting economies is given. This includes the balance sheet of the two countries along with the transaction flow matrix which describes the flows of assets and liabilites within and between the countries. A set of equations is presented which describes how each of the variables in the blanace sheet and transaction flow matrix are related. This set contains over one hundred equations with many of these equations ensuring that the model is closed and so satisfies double entry accounting standards. Because of the size of the system a rigorous analysis is difficult. The effect of exogeneous paramters and variables, or the depth of their effect, is difficult to ascertain. The DAG provides a highly useful tool to this effect. 

\begin{figure}
  \centering
  \includegraphics[width=0.45\textwidth]{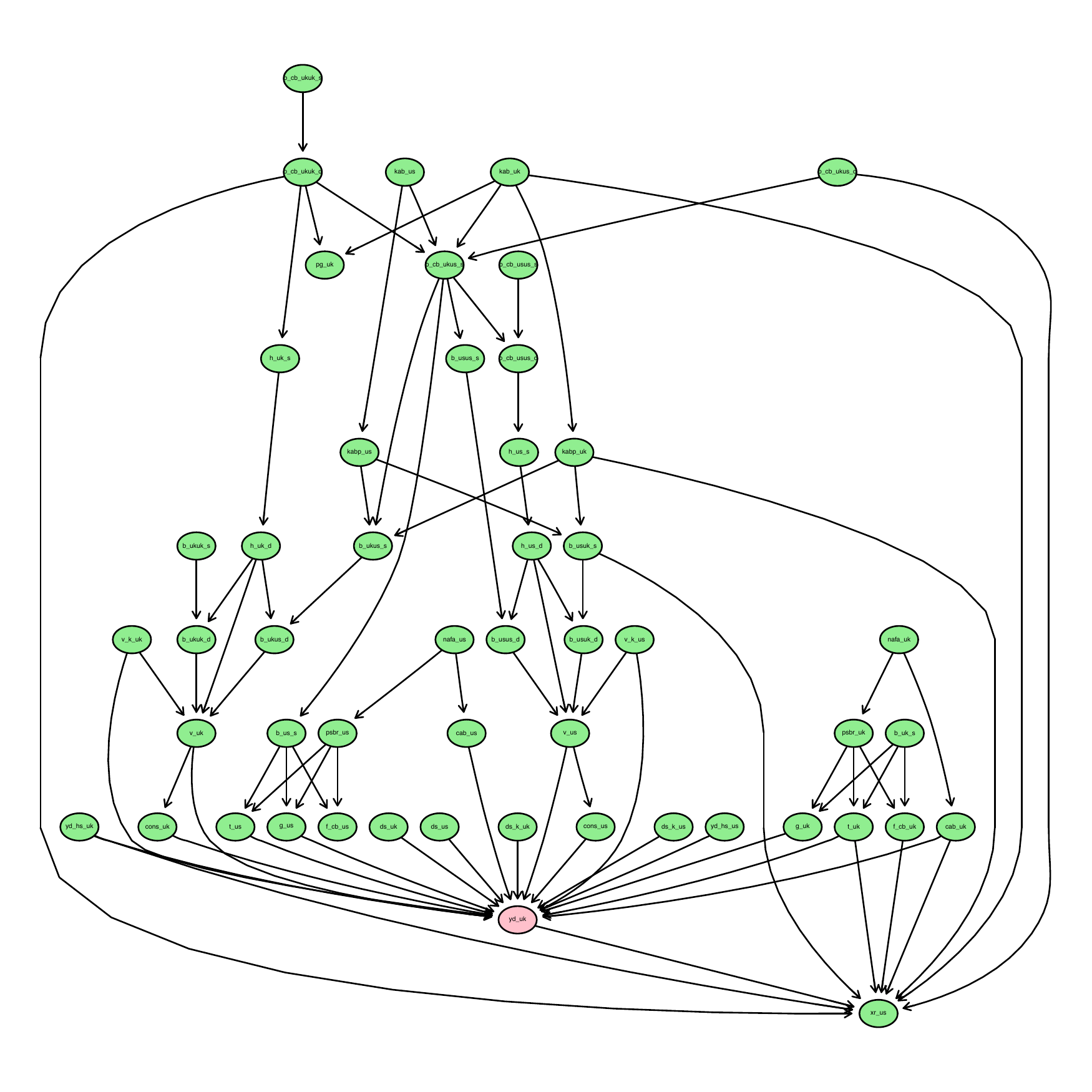}
  \label{fig:DAG_openfix}
  \includegraphics[width=0.45\textwidth]{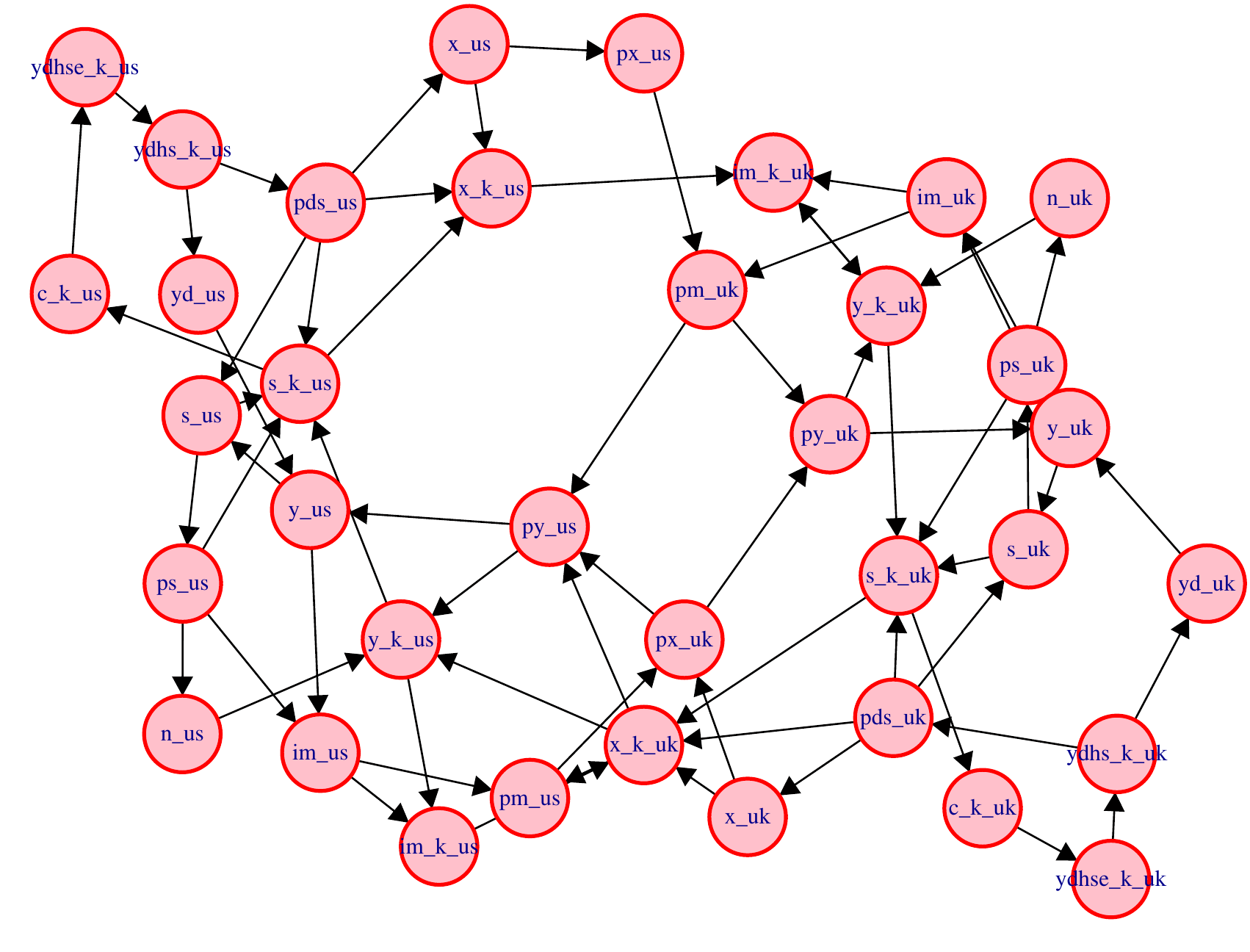}
  \label{fig:SCC_openfix}
  \caption{Left: DAG of the Open Economy SFC model. Each green node corresponds to a variable of the model; these nodes do not belong to any loops. The red node is a metanode in the condensation graph. This corresponds to a strongly connected component in the original directed graph. Right: The stongly connected component which is contracted into a metanode.}
  \label{fig:openfix}
\end{figure}

The DAG of the model is shown in Fig.~\ref{fig:openfix}. The directed graph attained from the Jacobian of the system of equations of the model has one non-trivial strongly connected component. This strongly connected component is shown in Fig.~\ref{fig:openfix}. It can be seen that every node in the SCC can both reach and be reached from every other node in the SCC along some directed path. Once the strongly connected component has been replaced by a metanode - as illustrated with the red node in Fig.~\ref{fig:openfix} - the directed graph becomes a DAG.

\section{Conclusion}
\label{sec:conclusion}

The objective of this paper was to rigorously show that for every stock flow consistent (SFC) macroeconomic model there is a corresponding directed acyclic graph (DAG) which is unique. 

We construct the DAG of any stock flow consistent model in the following manner. Firstly, we use Jacobian methods on the system of equations of the stock flow consistent model to form a directed graph. This directed graph is then decomposed into its corresponding condensation graph by replacing the strongly connected components with single metanodes (Fig.~\ref{fig:lemma}). Since every cycle in the directed graph is contained in exactly one strongly connected component, the condensation graph is acyclic and so we have formed a DAG.

We illustrate the theory with an example, the \textsf{OPEN} model developed by \cite{godley:2007}, and provide details of a computational tool that gives the directed acyclic graphical representation of any stock flow consistent model. We have used this package to generate the DAGs of almost every model within \cite{godley:2007} and these are available online\footnote{The website is \href{http://stephenkinsella.net}{here}.}.

This formal linkage of two rather disparate fields is important. The DAG makes it much easier to visualise large macroeconomic models. This is useful not only for analysis but also for efficiently solving the model computationally. The topological ordering of the system (Fig.~\ref{fig:openfix}) - especially the isolation of cycles - can lead to great improvements in computational speed, which decreases as the model increases in size.

This linkage is important as directed acyclic graphs allow us to use well understood techniques for system-wide causal discovery \citep{pearl2000causality}. Graph-theoretic search methods have not, typically, been used for time series data, the data type we normally use in macroeconomic models.

Our further work will concentrate on recovering directed acyclic graphs from empirical stock flow consistent models to aid in model selection and to generate a topological ordering within the model structures.

\bibliographystyle{Chicago}
\bibliography{References_DAG}

\end{document}